\documentclass{amsart}
\usepackage{amssymb}
\usepackage{lmodern}
\usepackage[T1]{fontenc}
\usepackage[utf8]{inputenc}
\usepackage{microtype}
\usepackage[usenames, dvipsnames]{xcolor}
\usepackage[backref=page, colorlinks=true]{hyperref}
\hypersetup{
 linkcolor={red!50!black},
 citecolor={green!50!black},
 urlcolor={blue!80!black}
}
\usepackage{tikz-cd}
\usepackage[margin=1.3in]{geometry}
\usepackage[capitalize, noabbrev]{cleveref} 

\newcommand{\beq}{\begin{eqnarray}}
	\newcommand{\eeq}{\end{eqnarray}}

\newcommand{\bsp}{\begin{aligned}}
	\newcommand{\esp}{\end{aligned}}

\usepackage{xcolor}
\definecolor{darkblue}{rgb}{0.,0.,0.4}
\definecolor{darkred}{rgb}{0.5,0.,0.}
\definecolor{BlueViolet}{RGB}{138,43,226}
\definecolor{SkyBlue}{RGB}{30,144,255}
\definecolor{DarkGreen}{RGB}{0,100,0}



\usepackage{mathrsfs}
\usepackage{subcaption}
\usepackage{comment}

\newcommand{\Z}{\mathbb{Z}}

\usepackage{pst-all}
\usepackage{leftidx}
\usepackage[off]{auto-pst-pdf} 
\usepackage{yfonts}
\makeatletter

\usepackage{subcaption}

\usepackage{svg}


\usepackage[capitalize, noabbrev]{cleveref} 

\numberwithin{equation}{section}
\usepackage{amsthm}
\newtheorem{thm}[equation]{Theorem}

\newtheorem{lem}[equation]{Lemma}

\newtheorem{prop}[equation]{Proposition}
\newtheorem{cor}[equation]{Corollary}

\newtheorem*{thm*}{Theorem}

\newcounter{mainthm}

\newtheorem{maintheorem}[mainthm]{Theorem}

\newtheorem*{slogan}{Slogan}

\theoremstyle{definition}
\newtheorem{defn}[equation]{Definition}
\newtheorem{example}[equation]{Example}
\theoremstyle{remark}
\newtheorem{rem}[equation]{Remark}

\crefname{cor}{Corollary}{Corollaries}

\crefname{thm}{Theorem}{Theorems}
\crefname{prop}{Proposition}{Propositions}
\crefname{lem}{Lemma}{Lemmas}

\crefname{defn}{Definition}{Definitions}
\usepackage{mathtools}

\usepackage{tikz-cd}
\usepackage{subcaption}
\usepackage{setspace}
\usepackage{booktabs}
\usetikzlibrary{calc}

\usepackage[mathscr]{eucal}
\makeatletter
\def\instring#1#2{TT\fi\begingroup
  \edef\x{\endgroup\noexpand\in@{#1}{#2}}\x\ifin@}
\def\isuppercase#1{%
  \instring{#1}{ABCDEFGHIJKLMNOPQRSTUVWXYZ}%
}%
\newcommand{\C@lIfUpper}[1]{
 \if\isuppercase{#1}\mathscr{#1}%
 \else #1%
 \fi
}
\newcommand{\cat}[1]{\mathit{\@tfor\next:=#1\do{\C@lIfUpper{\next}}}}
\makeatother

\usepackage{xparse}
\newcommand{\bSigma}{\mathbf{\Sigma}}

\newcommand{\cZ}{\mathcal Z}
\newcommand{\rB}{\mathrm B}

\newcommand{\Spin}{\mathrm{Spin}}

\newcommand{\pt}{\mathrm{pt}}

\newcommand{\matt}[1]{{\bf \color{orange} [MY: #1]}}
\newcommand{\dev}[1]{{\bf \color{blue} [DS: #1]}}

\newcommand{\id}{\mathrm{id}}

\newcommand{\Vect}{\mathbf{Vect}}

\newcommand{\nVect}{\mathbf{nVect}}

\newcommand{\Spaces}{\Cat_{(\infty,0)}}

\newcommand{\fib}{\mathbf{Fib}}
\newcommand{\cob}{\mathrm{Cob}}

\newcommand{\cC}{\mathcal C}
\newcommand{\cD}{\mathcal D}

\newcommand{\cX}{\mathcal X}

\newcommand{\Hom}{\mathrm{Hom}}

\newcommand{\sAut}{\mathscr{A}ut}

\newcommand{\Fun}{\mathrm{Fun}}
\newcommand{\End}{\mathrm{End}}

\newcommand\Set{\mathbf{Set}}
\newcommand\Cat{\mathbf{Cat}}

\newcommand{\Th}{\mathbf{Th}}

\newcommand{\un}{\mathbf{Un}}
\newcommand{\st}{\mathbf{Str}}
\newcommand{\RMod}{\mathbf{RMod}}
\newcommand{\lin}{\mathrm{lin}}
\newcommand{\oblv}{\mathrm{oblv}}
\newcommand{\rint}{\mathrm{int}}
\newcommand{\nnVect}{\mathbf{(n+1)Vect}}
\newcommand{\bTheta}{\mathbf{\Theta}}

\newcommand{\alignedintertext}[1]{%
  \noalign{%
    \vskip\belowdisplayshortskip
    \vtop{\hsize=0.75\linewidth#1\par
    \expandafter}%
    \expandafter\prevdepth\the\prevdepth
  }%
}

\newcommand\MAILTO[1]{\href{mailto:#1}{\nolinkurl{#1}}}

%
\DeclareDocumentCommand{\shortexact}{s O{} O{} mmmm}{
\IfBooleanTF{#1}{ 
\begin{tikzcd}[ampersand replacement=\&]
	{1} \& {#4} \& {#5} \& {#6} \& {1#7}
	\arrow[from=1-1, to=1-2]
	\arrow["#2", from=1-2, to=1-3]
	\arrow["#3", from=1-3, to=1-4]
	\arrow[from=1-4, to=1-5]
\end{tikzcd}
}{ 
\begin{tikzcd}[ampersand replacement=\&]
	{0} \& {#4} \& {#5} \& {#6} \& {0#7}
	\arrow[from=1-1, to=1-2]
	\arrow["#2", from=1-2, to=1-3]
	\arrow["#3", from=1-3, to=1-4]
	\arrow[from=1-4, to=1-5]
\end{tikzcd}
}}



\usepackage{xspace}

\newcommand{\Aut}{\mathrm{Aut}}


\begin{document}

\title{A Generalized Crystalline Equivalence Principle}

\author{Devon Stockall}
\address{Centre for Quantum Mathematics,
University of Southern Denmark, Campusvej 55,  5230 Odense M,
Denmark }
\email{stockall@imada.sdu.dk}

\author{Matthew Yu}
\address{Mathematical Institute, University of Oxford,  Woodstock Road, Oxford, UK}
\email{yumatthew70@gmail.com}

\begin{abstract}
We prove a general version of the crystalline equivalence principle which gives an equivalence of categories between a category of TQFTs defined on a generic space with $G$-symmetry, and a category of TQFTs with internal symmetry. We give a definition and classification of anomalies associated to TQFTs in the presence of spatial symmetry, which we then generalize to a definition of an anomaly for a categorical symmetry. 
\end{abstract}

\maketitle

\section{Introduction}\label{section:intro}
The crystalline equivalence principle (CEP) was introduced by Thorngren-Else in \cite{TE1} and further developed in \cite{TE2}.  It proposes a correspondence between \emph{crystalline topological phases}, which are families of topological quantum field theories (TQFTs) parameterized by a (contractable) space with $G$-symmetry, and TQFTs with internal $G$-symmetry.  This is particularly powerful in the study of spatial symmetry-protected topological (SPT) phases \cite{Jiang:2016yea,Cheng:2018aaz,Freed:2019jzd,Zhang:2019zcj,zhang2022real}, using the techniques of homotopy theory explained in \cite{FH:2016rqq}, and may also provide insight into SymTFTs for spatial symmetries, as studied in \cite{Pace:2024acq,Pace:2025hpb}.  The fermionic analogue has been applied to classify fermionic SPTs with spatial symmetries in  \cite{Debray:2021rik}.  

More generally, one expects a correspondence between families of TQFTs parameterized by a space $\mathcal{X}$ with symmetry group $G$, and families of TQFTs parameterized by the \emph{homotopy quotient} $\mathcal{X}/\!/G$ (\cite[Theorem 3]{TE1}).  We will call this correspondence the \textit{generalized crystalline equivalence principle} (GCEP).\footnote{This name is motivated by condensed matter physics, where the space $\cX$ is treated as a lattice in which the topological phase is defined on. In more abstract formulations of quantum spin systems, the lattice is taken to be any metric space. For our results, we can be agnostic about the nature of $\cX$.}  The goal of this work is to elucidate the physical content of the GCEP by developing a precise mathematical framework in which it can be rigorously formulated and proven.  Within this framework, we also investigate \emph{anomalous} families of theories, and demonstrate how an anomalous family may be naturally understood as a relative TQFT.  

\subsection{Main Results}\label{subsection:results}

Let $\bTheta$ be a target for $n$-dimensional framed (resp. bosonic/fermionic) TQFTs, i.e. a symmetric monoidal $(\infty,n)$-category with duals (Definition \ref{def:withduals}) (resp. $\mathrm{SO}(n)/\mathrm{Spin}(n)$ fixed points of such a category).  Let $\cX$ be a space.  We define a notion of \emph{$\cX$-family of TQFTs valued in $\bTheta$} (\Cref{def:Xfamily}).  If $\cX$ is equipped with an action of group $G$, one can define the \emph{homotopy quotient} $\cX/\!/G$, and specify what it means for a $\cX$-family of TQFTs $G$-invariant (\Cref{def:GinvntXfamily}).  After giving precise mathematical meaning to each of these physical concepts, the following restatement of \cite[Theorem 3]{TE1} is an immediate consequence of the universal property of (homotopy) colimits. 

\begin{maintheorem}\label{thm:GCEP}
 Let $\cX$ be a space equipped with $G$-action.  There is an equivalence of $(\infty,n)$-categories between the category of $G$-invariant $\cX$-families of TQFTs valued in $\bTheta$, and the category of $\cX/\!/G$-families of TQFTs valued in $\bTheta$. 
\end{maintheorem} 

\begin{proof}
We prove this fact more generally for $\cX$ being an $(\infty,n)$-category equipped with $G$-action.  Denote by $\underline{\bTheta}$ the constant functor $\rB G\to \Cat_{(\infty,n)}$, which takes the morphisms $g\in \Hom_{\rB G}(\pt,\pt)$ to $\id_\bTheta$.  By the universal property of homotopy colimit, we have an equivalence of $(\infty,n)$-categories 
\begin{equation}
\Hom_{\Fun(\rB G,\Cat_{(\infty,n)})}(\cX,\underline{\bTheta})\simeq \Hom_{\Cat_{(\infty,n)}}(\mathrm{colim}_{\rB G} \cX,\bTheta)=:\Hom_{\Cat_{(\infty,n)}}(\cX/\!/G, \bTheta). 
\end{equation}
According to \Cref{def:Xfamily}, \Cref{def:GinvntXfamily}, and \Cref{rem:cobordismreduction}, this completes the proof.
\end{proof}

One should note that the above theorem is an equivalence of categories, and the TQFTs themselves on either side of the equivalence are \textbf{not} identified.  In the case that the space $\cX$ is contractable, we recover the statement of the Crystalline Equivalence Principle, as introduced in \cite{TE1}, and also proven in \cite[Section 7.3]{Kong:2021ups} for anti-unitary symmetries using different methods. 

\begin{cor}\label{cor:CEP}
   Suppose that $\cX$ is a contractable space equipped with $G$-action.  There is an equivalence of $(\infty,n)$-categories between the category of $G$-invariant $\cX$-families of TQFTs valued in $\bTheta$, and TQFTs valued in $\bTheta$, equipped with internal $G$-symmetry. 
\end{cor}

\begin{proof}
   In the case that $\cX$ is contractible, we have an equivalence $\rB G\simeq \cX/\!/G$.  The result follows from  Theorem \ref{thm:GCEP}, after noting that the data of a theory with internal $G$-symmetry (that is, a theory on manifolds with $G$-structure) is equivalent to that of a family of theories over $\rB G$ \cite[Theorem 2.4.26]{LurieTFT}.\footnote{The idea that `the data of a theory with internal $\cX$-symmetry is equivalent to that of a family of theories parameterized by $\cX$' is captured more generally by \cite[Theorem 2.4.18]{LurieTFT}, and has also been expressed and applied in the context of  factorization algebras in \cite{costello2023factorization}.}
\end{proof}
\noindent More generally, it is sensible to consider $G$-\emph{equivariant} families of TQFTs, rather than invariant families.  Given a map $G\to \mathrm{O}(n)$, the canonical action of $\mathrm{O}(n)$ on $\bTheta$ induces an action of $G$ on $\bTheta$.  In this context, a $G$-equivariant analogue of  \Cref{cor:CEP} follows from \cite[Theorem 2.4.18]{LurieTFT}.  As a special case, \emph{bosonic} theories arise by applying this procedure to $\mathrm{SO}(n)\to \mathrm{O}(n)$, and \emph{fermionic} theories arise by applying this procedure to $\mathrm{Spin}(n)\to \mathrm{O}(n)$. 

One can then consider equivariant bosonic of fermionic versions of the Crystalline Equivalence Principle, which arise by considering groups $G$, equipped with a map $G\to \mathrm{SO}(n)$ or $G\to \mathrm{Spin}(n)$, respectively.  This data in the fermionic case is equivalently encoded by $G\to \mathrm{O}(n)$, together with classes $s_1\in \mathrm H^1(\rB G,\mathbb{Z}/2)$ and $\omega_2\in \mathrm H^2( \rB G, \mathbb{Z}/2)$, which encapsulates the data of the two twists.  We will call this data a supergroup, and denote it as $G^f = (G,s_1,\omega_2)$.  This consistent with the data of a spatial fermionic symmetry, as studied in \cite[Section IIIA]{Barkeshli2023}.  The $G^f$ symmetry can mix with the spin structure to give a \textit{twisted spin structure}, where instead of $w_1(TM)=0$ and $w_2(TM)=0$  one allows $w_1(TM)=s_1$ and $w_2(TM)=\omega_2$. See \cite{Decoppet:2024htz} for a more detailed account on supergroups and superspaces. 

A fermionic crystalline topological phase is a fermionic TQFT defined on a space $\cX$ with a $G^f$-symmetry.
Examples of the fermionic CEP state that fermionic TQFTs on a manifold with spatial $G^f$-symmetry is equivalent to  fermionic TQFTs with $\tilde{G}^f$ internal symmetry, where $\tilde{G}^f$ is in general a different supergroup from $G^f$ \cite{Debray:2021rik,Cheng:2024awi,Cheng:2018aaz,Zhang:2019zcj}.

\begin{rem}
 On the side of the target category for TQFTs, one may employ categories enriched over super vector spaces, as in \cite{DY2025}, to accommodate fermionic theories. In our approach, within the framework of the universal target $\mathbf{\Theta}$, one can take into account for the fermionic nature of the TQFT by taking homotopy fixed points with respect to the group $\Spin$. For example, one could consider $\mathbf{\widehat{\Theta}} = \mathbf{nSVect}$, the $n$-category of finite dimensional super $n$-vector spaces over $\mathbb C$. The specific case of $n=4$ has applications to (3+1)d topological orders. The generalized cohomology theory corresponding to $\mathbf{4SVect}^\times$ was denoted $\mathrm{SW}$  in \cite{JF:2020twl,JF:2021tbq,Decoppet:2022dnz,DY2025,Debray:2025kfg}. In \cite{Stockall:2025ngz} the authors also used geometric categories for the target category, in order to accommodate theories with continuous symmetries. In general the target will change depending on what structures and properties of TQFTs one wants to capture.
\end{rem}

After relating equivariant $\cX$-families of TQFTs to $\cX/\!/G$-families of TQFTs, we present an independent approach for studying 't Hooft anomalies for families of TQFTs parameterized by an $\infty$-groupoid.  't Hooft anomalies capture the projectivity of a theory under the action of a group $G$, and are traditionally studied by examining the failure of the partition function to remain invariant under $G$-gauge transformations\footnote{Anomalies can arise even if there is no group acting, see \cite{Freed:2023snr}.}.  To specify anomalies for $\cX$-families of theories, it will be necessary to take another approach.  The subsequent definitions and theorems can be used to classify anomalies for spatial symmetries.

\begin{maintheorem}[Theorem \ref{thm:catofanomalies}]\label{mainthm:classifyanom}
    The category of anomalies for $\cX$-families of TQFTs with value $\bTheta$ is equivalent to the full subcategory of functors 
    $$\Fun(\cX,\rB\sAut(\mathbf{\Theta}))\subset \Fun(\cX,\Spaces)\,,$$
    on those $\alpha \in \Fun(\cX,\Spaces)$ such that $\alpha(y)\simeq \mathbf{\Theta}$ for all $x\in \cX$.  
\end{maintheorem}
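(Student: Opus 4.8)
The plan is to deduce this from the $\infty$-categorical Grothendieck construction (the straightening/unstraightening equivalence), so that the bulk of the argument is formal once the objects are correctly set up. First I would unwind Definition \ref{def:anomaly}: the category of anomalies for $(\cY,\mathbf{\Theta})$-theories has as objects the fibrations $\widetilde{\mathbf{\Theta}}\to\cY$ whose fibre over each point is equivalent to $\mathbf{\Theta}$, with morphisms the equivalences of such fibrations over $\cY$. This identifies the category of anomalies with the full subcategory of the slice $\Spaces_{/\cY}$ spanned by those $p\colon E\to\cY$ with $p^{-1}(x)\simeq\mathbf{\Theta}$ for every $x\in\cY$. (Because $\cY$ is an $\infty$-groupoid, every map of spaces over $\cY$ is a fibration up to equivalence, so no left-fibrancy condition is lost.)

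Next I would invoke straightening. Since $\cY$ is an $\infty$-groupoid, unstraightening gives an equivalence $\un\colon\Fun(\cY,\Spaces)\xrightarrow{\ \sim\ }\Spaces_{/\cY}$ under which a functor $\alpha$ is carried to a fibration whose fibre over $x\in\cY$ is the value $\alpha(x)$ (computed by pulling back along the point $x\colon\ast\to\cY$). Transporting the fibre condition across this equivalence turns $p^{-1}(x)\simeq\mathbf{\Theta}$ into $\alpha(x)\simeq\mathbf{\Theta}$, so the category of anomalies is equivalent to the full subcategory of $\Fun(\cY,\Spaces)$ on those $\alpha$ with $\alpha(x)\simeq\mathbf{\Theta}$ for all $x$. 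This already gives the second description in the theorem; it remains to recognise it as $\Fun(\cY,\rB\sAut(\mathbf{\Theta}))$.

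For the last step I would use that $\rB\sAut(\mathbf{\Theta})$ is, by definition, the full sub-$\infty$-groupoid spanned by $\mathbf{\Theta}$ --- equivalently the classifying space of the $\infty$-group $\sAut(\mathbf{\Theta})$ --- so that the inclusion $\iota\colon\rB\sAut(\mathbf{\Theta})\hookrightarrow\Spaces$ is fully faithful. Fully faithful functors are preserved by $\Fun(\cY,-)$, so $\iota_{*}\colon\Fun(\cY,\rB\sAut(\mathbf{\Theta}))\to\Fun(\cY,\Spaces)$ is fully faithful, exhibiting the left-hand side as a full subcategory. Its essential image consists of exactly those $\alpha$ that factor through $\rB\sAut(\mathbf{\Theta})$; a functor out of the groupoid $\cY$ factors through a full subgroupoid precisely when each value lies in it, since $\alpha$ automatically sends every (necessarily invertible) morphism of $\cY$ to an equivalence in $\Spaces$, which already lies in the full subgroupoid $\rB\sAut(\mathbf{\Theta})$. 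Hence the essential image is exactly the subcategory $\{\alpha:\alpha(x)\simeq\mathbf{\Theta}\}$ from the previous step, completing the identification.

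The main obstacle is not any of these formal steps but the type-matching of $\mathbf{\Theta}$: as an $(\infty,n)$-category it is not literally an object of $\Spaces$. I would resolve this by being explicit that the bundles of Definition \ref{def:anomaly} are \emph{locally constant} families with fibre $\mathbf{\Theta}$, so their structure data is governed by the $\infty$-group $\sAut(\mathbf{\Theta})$ and its classifying space $\rB\sAut(\mathbf{\Theta})$, a genuine space. Concretely, one runs the straightening equivalence in the ambient $\infty$-category $\Cat_{(\infty,n)}$ of $(\infty,n)$-categories in place of $\Spaces$, and then observes that restricting to fibres equivalent to $\mathbf{\Theta}$ and equivalences between them confines the entire discussion to the core groupoid $\rB\sAut(\mathbf{\Theta})\subset\Cat_{(\infty,n)}$; writing $\Spaces$ for this core (or enlarging $\Spaces$ to contain $\mathbf{\Theta}$) is what makes the statement as written correct. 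Once this identification of locally constant $\mathbf{\Theta}$-bundles over $\cY$ with $\rB\sAut(\mathbf{\Theta})$-valued local systems is pinned down, the theorem follows.
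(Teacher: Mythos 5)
Your proposal is correct and follows essentially the same route as the paper: apply straightening/unstraightening to identify anomalies (fibrations over $\cY$ with fibre $\mathbf{\Theta}$) with functors $\cY\to\Cat_{(\infty,n)}$ factoring through the full subcategory on the object $\mathbf{\Theta}$, and then use that $\cY$ is an $\infty$-groupoid to land in the maximal subgroupoid $\rB\sAut(\mathbf{\Theta})$ (the paper phrases this as $\rB\End(\mathbf{\Theta})^{\times}\simeq\rB\sAut(\mathbf{\Theta})$). Your final paragraph resolving the type mismatch — running the Grothendieck construction in $\Cat_{(\infty,n)}$ rather than $\Spaces$ — is a welcome clarification of an imprecision the paper's statement leaves implicit, but it does not change the argument.
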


Since the above results make sense for anomalies of TQFTs with arbitrary target, in section \S\ref{section:anomaliesSPTs},  we use \ref{thm:catofanomalies} to prove a general theorem about how an anomalous theory can be seen as a relative TQFT.  This is closed related to the theory of anomalies presented in \cite[Section 3]{VanDyke2023}.  We give a detailed account for the case that $\bTheta=\mathbf{(n+1)Vect}$, corresponding to linear bosonic TQFTs, for which we have the following equivalence.
\begin{maintheorem}[Corollary \ref{thm:anomaliesrelative}]\label{mainthm:anomalyrelative}
    Consider anomaly $\alpha\in \Fun(\cX,\rB \nVect)$.  There is an equivalence of categories between $\cX$-families of theories with target $\nVect$ and anomaly $\alpha$, and the category of defects between $\underline{\nVect}$ and  $\alpha$, where $\underline{\nVect}$ is the constant $\cX$-family with value $\nVect\in \mathbf{(n+1)\Vect}$. 
\end{maintheorem}
Our approach to anomalies has the additional advantage that it naturally extends to TQFTs with anomalous categorical symmetries, by replacing the $\infty$-groupoid $\cX$ with an $(\infty,n)$-category $\cC$, and replacing $\Aut$ with $\End$.  We also expect, but have not confirmed, that our Theorems concerning anomalies for crystalline topological phases can be connected with the lattice anomalies in \cite{Shirley:2025yji,Tu:2025bqf}, by using the target given by quantum cellular automata.

  \subsection{Outline}
The contents of this paper is presented as follows: in \S\ref{section:prelim} we provide mathematical definitions for the physical objects appearing in \ref{thm:GCEP} and corollary \ref{cor:CEP}.  In \S\ref{section:anomaliesSPTs} we discuss anomalies for families of TQFTs, compare anomlies for $\rB G$-families with anomalies for $G$-symmetry, and describe anomalous families of $n$-dimensional TQFTs as $n$-dimensional TQFTs relative to an $(n+1)$-dimensional theory. 

\section{Families and $G$-invariant families of TQFTs}\label{section:prelim}
In the following section, we introduce the prerequisites required to formulate and prove Theorem \ref{thm:GCEP}.  It will be necessary to work with the category of $n$-dimensional TQFTs.  We first introduce some notation.  We will refer to \cite{LurieTFT} Definition 2.3.13 and 2.3.16 for further details.  
\begin{defn}\label{def:withduals}
    Let $\bTheta$ be a symmetric monoidal $(\infty,n)$-category.  We say that $\bTheta$ \emph{has duals} if it has duals for objects and, for all $1\leq i<n$, all $i$-morphisms admit adjoints.
\end{defn}

\begin{rem}\label{rem:cobordismreduction}
Take $\mathbf{\Theta}'$ to be a symmetric monoidal $(\infty,n)$-category with duals.  A (fully-extended) $n$-dimensional TQFT with target $\mathbf{\Theta}'$ is a symmetric monoidal $(\infty,n)$-functor $\cob_n\to \mathbf{\Theta}'$ from the (framed) $n$-dimensional cobordism $(\infty,n)$-category $\cob_n$, and codimension-$i$ defects are monoidal lax $i$-transfors.  By the cobordism hypothesis with singularities \cite[Section 4.3]{LurieTFT} and \cite{JohnsonFreydScheimbauer}, one gets a canonical equivalence between the $(\infty,n)$-category of `TQFTs and defects', and the core $\bTheta$ of $\bTheta'$.  
This equivalence allows us to consider $\mathbf{\Theta}$ to be `a category of $n$-dimensional TQFTs'.  Using this equivalence, we can manipulate TQFTs by manipulating the target category $\bTheta$, without reference to the cobordism category.  We will make use of this perspective for the remainder of this work. 
\end{rem}

\begin{defn}\label{def:Xfamily}
   Let ${\mathbf{\Theta}}$ be a category of $n$-dimensional TQFTs. An \emph{$\cX$-family of TQFTs valued in $\bTheta$}, or simply an \emph{$(\cX,\bTheta)$-theory}, is a functor $\mathrm{Th}:\cX \rightarrow \mathbf{\Theta}$.
\end{defn}

\begin{example}
Suppose that $\cX$ is a space equipped with $G$-action.  Theorem \ref{thm:GCEP} provides an abundance of $\cX/\!/G$-families of TQFTs. 
\end{example}

\begin{defn}
    Let $G$ be a group.  Let $\rB G$ denote its \emph{delooping}: the groupoid with a single object, denoted by $\pt\in \rB G$, and morphisms given by $\Hom_{\rB G}(\pt,\pt)\simeq G$.  If $G$ is abelian, this construction can be iterated to give an $i$-groupoid $\rB^iG$. 
\end{defn}

Given a category $\cC$ and an object $c\in \cC$ equipped with $G$-action, one can equivalently view this data as a functor $\rB G\to \cC$, taking the unique object $\pt\in \rB G$ to $c\in \cC$, and taking each $g\in \Hom_{\rB G}(\pt,\pt)$ to the endomorphism $g\cdot-:c\to c$.  

\begin{example}
Suppose that $c\in \cC$.  Denote by $\underline{c}:\rB G\to \cC$ the \emph{constant functor}, which takes all objects in $\rB G$ to $\cC$, and all morphisms to $\id_{c}$.  This is the same as equipping $c$ with the trivial action of $G$. 
\end{example}

\begin{example}
Let $\bTheta$ be a symmetric monoidal $(\infty,n)$-category with duals, which we equivalently view as a category whose objects are $n$-dimensional TQFTs, and whose $i$-morphisms are codimension $i$-defects as in remark \ref{rem:cobordismreduction}. 

Suppose that $G$ is a group.  According to the previous discussion, a $\rB G$-family of theories valued in $\bTheta$, i.e. a functor $\Th:\rB G\to \bTheta$, picks out a theory $\Th(\pt)\in \bTheta$, together with a family of codimension-1 defects labeled by $G$, with fusion rules determined by group multiplication.  

More generally, if $G$ is abelian, a $\rB^{i+1} G$-family of theories valued in $\bTheta$ picks out a theory $\Th(\pt)\in \bTheta$, together with a family of codimension-${i+1}$ defects labeled by $G$, with fusion rules determined by group multiplication.  This is the data of an \emph{$i$-form global symmetry}, in the sense of \cite{GeneralizedGlobal}.  We summarize with the following slogan:
\begin{slogan}
    An $i$-form $G$-symmetry is equivalent to a $\rB^{i+1} G$-family of theories.  
\end{slogan}
We will justify this perspective further in section \ref{section:anomaliesSPTs} by demonstrating that anomalies of $\rB^{i+1} G$-families recover the expected anomalies of $i$-form symmetries. 

In particular, a $\cX$-family of theories should be viewed as a $(-1)$-form $\cX$-symmetry.  In the existing literature, $(-1)$-form symmetries have been interpreted in terms of decomposition phenomena \cite{Sharpe:2022ene,Cordova:2019jnf}, or as space-filling operators \cite{Yu:2020twi}.
\end{example}

\begin{example}
According to the \emph{homotopy hypothesis}, a space equipped with $G$-action is equivalent to a functor $\rB G\to \Cat_{(\infty,0)}$ from $\rB G$ into the category of \emph{$\infty$-groupoids}. 
\end{example}

\begin{example}
   An $(\infty,n)$-category equipped with $G$-action is equivalent to a functor $\rB G\to \Cat_{(\infty,n)}$. 
\end{example}
 
There is a canonical functor $\Cat_{(\infty,n-1)}\hookrightarrow \Cat_{(\infty,n)}$, obtained by viewing an $(\infty,n-1)$-category as an $(\infty,n)$-category with only invertible $n$-morphisms.  A space with $G$-action $\cX$ determines an $(\infty,n)$-category with $G$-action via the composition 
\begin{equation}
\rB G\xrightarrow{\cX} \Cat_{(\infty,0)}\hookrightarrow\Cat_{(\infty,1)}\hookrightarrow\hdots \Cat_{(\infty,n-1)}\hookrightarrow \Cat_{(\infty,n)}. 
\end{equation}

\begin{defn}
    Let $\cX\in \Fun(\rB G, \Cat_{(\infty,n)})$ be an $(\infty,n)$-category equipped with $G$-action.  The \emph{homotopy quotient} of this $G$-category is defined to be the homotopy colimit $\cX/\!/ G:=\mathrm{colim}_{\rB G} \cX$. 
\end{defn}
Given two categories with $G$-action, viewed as functors $\rB G\to \Cat_{(\infty,n)}$, one can check that a $G$-equivariant functor between $G$-categories is equivalent to a natural transformation between the associated functors.  The following is a rephrasing of \cite[Definition 6]{TE1} in this language. 
\begin{defn}\label{def:GinvntXfamily}
    Let $\cX$ be a $G$-space, and $\Theta$ be a symmetric monoidal $(\infty,n)$-category with duals.  A \emph{G-invariant $\mathcal{X}$-family of TQFTs valued in $\bTheta$} is equivalent to a natural transformation from $\cX$ to $\underline{\bTheta}$.  That is, an object of $\Hom_{\Fun(\rB G,\Cat_{(\infty,n)})}(\cX,\underline{\bTheta})$. 
\end{defn} 

\section{Anomalies of $\infty$-groupoid symmetry}\label{section:anomaliesSPTs}
In this section we give a definition of an anomaly for arbitary $\infty$-groupoid symmetry, which in particular applies to anomalies for $\cX/\!/G$-families as constructed by Theorem \ref{thm:GCEP}.  We begin by showing that (nonanomalous) families of theories (Definition \ref{def:Xfamily}) can be equivalently encoded as follows.
\begin{lem}\label{lem:trivialbundle}
    The data of a $\cX$-family of TQFTs valued in $\mathbf{\Theta}$ is equivalent to a section of the trivial fibration $\pi_\cX:\cX \times \mathbf{\Theta}\to \cX$. 
\end{lem}
\begin{proof}
  Given a functor $\mathrm{Th}:\cX\to \mathbf{\Theta}$, one obtains a section of the trivial fibration from the universal property of the product   
\begin{equation}
    \begin{tikzcd}
        &\cX \ar[d,dotted]\ar[dl,swap,"\mathrm{id}_\mathcal{X
        }"]\ar[dr,"\mathrm{Th}"]&\\
        \cX&\cX\times \mathbf{\Theta}\ar[l,"\pi_\cX"]\ar[r,swap,"\pi_\mathbf{\Theta}"]&\mathbf{\Theta}
    \end{tikzcd}
\end{equation}
Conversely, given a section $\gamma$, one obtains a functor $\pi_\mathbf{\Theta}\circ \gamma:\cX\to \mathbf{\Theta}$. \end{proof}

This suggests a natural generalization:  We can replace the trivial fibration $\cX\times \mathbf{\Theta}\to \cX$ with another fibration $\widetilde{\mathbf{\Theta}}\to \cX$ whose fibers over any $x\in \cX$ agree with $\mathbf{\Theta}$. 
\begin{defn}\label{def:anomaly}
An \emph{anomaly for $\cX$-families of theories valued in $\mathbf{\Theta}$} is a fibration $\widetilde{\mathbf{\Theta}} \to \cX$ with fiber $\bTheta$. An \emph{anomalous $\cX$-family of theories with anomaly $\widetilde{\bTheta}$} is a section of $\widetilde{\mathbf{\Theta}}$. 
\end{defn}
We have not yet defined what we mean by a fibration of categories.  We proceed with the relevant background on definitions and tools required to prove Theorem \ref{mainthm:classifyanom} and Theorem \ref{mainthm:anomalyrelative}. 
\subsection{Fibrations of Categories and Straightening/Unstraightening Equivalence}\label{subsection:Unstraightening}
For the purpose of pedagogy, we begin by phrasing the correspondence between covering spaces and fibers in language appropriate for generalization.  

Let $X$ be a topological space, and let $p:W\to X$ be a covering space.  Given a path $\gamma:[0,1]\to X$ with $\gamma(0)=x$ and $\gamma(1)=x'$, and $w\in \pi_1(W)$ with $p(w)=x$, there is a unique lift of $\gamma$ starting at $w\in W$.

We now rephrase this in terms of fundamental groupoids.  The covering space $p:W\to X$ induces a map of fundamental groupoids $p:\pi_1(W)\to \pi_1(X)$.  Given a morphism $\gamma:x\to x'$ in $\pi_1(X)$ and $w\in \pi_1(W)$ with $p(w)=x$, there is a unique morphism $\tilde{\gamma}$ in $\pi_1(W)$, satisfying $p(\tilde{\gamma})=\gamma$, whose source is $w$. 

Then the data of a covering space $p:W\to X$ is equivalent to a map of 1-groupoids $\pi_1(W)\to \pi_1(X)$ with this `lifting property'.  The category of `groupoids $\cZ$ equipped with map $\cZ\to \pi_1(X)$ that fulfills this lifting property', forms a full subcategory of the slice category $\fib_{/\pi_1(\cX)}\subset \mathbf{Gpd}_{/\pi_1(\cX)}$. 

The category $\fib/_{\pi_1(X)}$ is equivalent to another category.  Given a covering space $p:W\to X$, one obtains a functor $\pi_1(X)\to \Set$ which takes a point $x\in \pi_1(X)$ to the fiber $p^{-1}(x)$, and a path to the corresponding map of fibers.  

The covering fundamental groupoid $\pi_1(W)$ can be reconstructed from this `functor of fibers'. One can show that this correspondence defines an equivalence of categories 
\begin{equation}
\Fun(\pi_1(X),\Set)\simeq \fib_{/\pi_1(X)}\,.
\end{equation}

We will need to generalize this correspondence in a few ways.  First, note that a groupoid is a $(1,0)$-category, and a set is a $(0,0)$-category.  Then, the above is really an equivalence
\begin{equation}
\Fun(\pi_1(X),\Cat_{(0,0)})\simeq \fib_{/\pi_1(X)}\subset {\Cat_{(1,0)}}_{/\pi_1(X)}.
\end{equation}

The correspondence still holds if we replace the fundamental groupoid $\pi_1(X)$ with the fundamental $\infty$-groupoid $\Pi(X)$, which remembers not only paths, but all higher homotopies, and similarly replaced our sets with $\infty$-groupoids.  In fact, for an appropriate notion of fibration\footnote{When one allows non-invertible paths (i.e. $(\infty,1)$-categories), there are two notions of fibration, called \emph{Cartesian} and \emph{coCartesian}, corresponding to $\Fun(\cC^{op},\Cat_{(\infty,1)})$ and $\Fun(\cC,\Cat_{(\infty,1)})$, respectively.  More generally, if one allows $(\infty,n)$-categories, there are $2^n$ notions of fibration, corresponding to covariance or contravariance along each layer of non-invertible homotopy.}, one could even allow `non-invertible homotopies'.  This amounts to replacing our $(1,0)$ and $(0,0)$-categories with $(\infty,n)$-categories.  For any $\cC\in \Cat_{(\infty,n)}$, we arrive at an equivalence
\begin{equation}
\begin{tikzcd}
\Fun(\cC,\Cat_{(\infty,n)})\ar[r,shift left=1,"\un_\cC"]&\fib_{/\cC}\ar[l,shift left=1, "\st"]\subset {\Cat_{(\infty,n)}}_{/\cC}\,.  
\end{tikzcd}
\end{equation}
This equivalence is called the \emph{Grothendieck construction} or \emph{straightening/ unstraightening} correspondence.  

Consider a functor $F \in \Fun(\cC,\Cat_{(\infty,n)})$.  Informally, the category $\un_{\cC}(F)$ has objects given by pairs $(c,x)$ where $c\in \cC$ and $x\in F(c)$.  A morphism $(c,x)\to (d,y)$ is a pair of a morphism $g:c\to d$ and a morphism $F(g)(x)\to y$.

\begin{rem}
    The staightening/unstraightening equivalence plays a key role in constructing coherent functors of $(\infty,1)$-categories, and was developed in \cite{LurieHTT}.  We refer to \cite{MazelGeecoCart} for a detailed summary in the context of $(\infty,1)$-categories, including definitions of (co)Cartesian fibrations.  We require a version of the staightening/unstraightening equivalence for $(\infty,n)$-categories, and we refer to \cite{nuiten2024straightening} for proof of this equivalence.
\end{rem}

\begin{example}
   Consider the identity fibration $\mathrm{id}_\cX:\cX\to \cX$.  The fibers of this fibration are trivial, so the straightening $\st(\id_\cX)$ is given by the constant functor $\underline{\pt}:\cX\to \Cat_{(\infty,n)}$, which takes objects in $\mathcal{X}$ to the trivial category $\pt\in \Cat_{(\infty,n)}$, and all morphisms to $\id_{\pt}$.
   
   Consider another fibration $f:\mathcal{V}\to \cX$.  A section of $f$ is taken to a natural transformation $\underline{\pt}\to \st(f)$.  This is the key construction behind the proof of Theorem \ref{thm:section}. 
\end{example}

\subsection{Classification of anomalies}

\begin{thm}\label{thm:catofanomalies}
     The category of anomalies for $n$-dimensional $(\cX,\mathbf{\Theta})$-theories is equivalent to the full subcategory of functors 
    $$\Fun(\cX,\rB\sAut(\mathbf{\Theta}))\subset \Fun(\cX,\Spaces)\,,$$
    on those $\alpha \in \Fun(\cX,\Spaces)$ such that $\alpha(y)\simeq \mathbf{\Theta}$ for all $x\in \cX$.  
\end{thm}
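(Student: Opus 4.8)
The plan is to unwind Definition \ref{def:anomaly} and then feed it into the straightening/unstraightening correspondence. By that definition, an anomaly is a fibration $\widetilde{\mathbf{\Theta}}\to\cY$ all of whose fibers are equivalent to $\mathbf{\Theta}$, and a morphism of anomalies is an equivalence of such fibrations over $\cY$. First I would apply the straightening functor $\st$ to identify the category of fibrations over $\cY$ with $\Fun(\cY,\Cat_{(\infty,n)})$; under this identification the fibration $\widetilde{\mathbf{\Theta}}\to\cY$ is carried to a functor $\alpha\colon\cY\to\Cat_{(\infty,n)}$ whose value $\alpha(y)$ is precisely the fiber over $y$. The morphisms match as well: morphisms of fibrations over $\cY$ correspond to natural transformations of the straightened functors, so $\st$ is an equivalence of categories, not merely a bijection on objects.

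Next I would use that $\cY$ is an $\infty$-groupoid. Since every morphism of $\cY$ is invertible, any functor $\alpha\colon\cY\to\Cat_{(\infty,n)}$ sends each morphism to an equivalence, so $\alpha$ factors through the core $\Cat_{(\infty,n)}^{\simeq}$, the maximal sub-$\infty$-groupoid of $\Cat_{(\infty,n)}$; this is the (large) space of $(\infty,n)$-categories, which I read as the $\Spaces$ appearing in the statement. Conceptually this is the statement that a fibration over a groupoid base reduces to its underlying local system of categories-up-to-equivalence. The fiber condition from Definition \ref{def:anomaly}, namely $\alpha(y)\simeq\mathbf{\Theta}$ for all $y$, then says exactly that $\alpha$ lands in the path component of the point $\mathbf{\Theta}\in\Cat_{(\infty,n)}^{\simeq}$.

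The final step is to identify that path component. In any space the connected component of a point $x$ is its delooping $\rB\Aut(x)$, so the component of $\mathbf{\Theta}$ is $\rB\sAut(\mathbf{\Theta})$, where $\sAut(\mathbf{\Theta})$ is the $\infty$-group of self-equivalences of $\mathbf{\Theta}$. A component-preserving functor $\cY\to\Cat_{(\infty,n)}^{\simeq}$ landing in this component is thus the same datum as a map $\cY\to\rB\sAut(\mathbf{\Theta})$, which gives the claimed full subcategory $\Fun(\cY,\rB\sAut(\mathbf{\Theta}))\subset\Fun(\cY,\Spaces)$. Assembling the three equivalences — $\st$, factorization through the core, and restriction to the $\mathbf{\Theta}$-component — yields the asserted equivalence of categories.

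The main obstacle I anticipate is making the identification of the component of $\mathbf{\Theta}$ with $\rB\sAut(\mathbf{\Theta})$ precise at the level needed: one must check that the mapping space between two equivalent objects of $\Cat_{(\infty,n)}^{\simeq}$ is a torsor for $\sAut(\mathbf{\Theta})$, so that this component is connected with loop space $\sAut(\mathbf{\Theta})$ and is therefore $\rB\sAut(\mathbf{\Theta})$. A secondary subtlety is pinning down the notion of fibration implicit in Definition \ref{def:anomaly} over a groupoid base — a priori there are several variance conventions, as noted in the discussion of (co)Cartesian fibrations — and verifying that over a groupoid these coincide and are all classified by $\Spaces$-valued functors, so that the straightening step and the reduction to the core are unambiguous.
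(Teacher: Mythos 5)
Your proposal is correct and takes essentially the same route as the paper's proof: straighten the fibration to a functor $\cY\to\Cat_{(\infty,n)}$, use that $\cY$ is an $\infty$-groupoid to force all morphisms to land in equivalences, and identify the component of $\bTheta$ with $\rB\sAut(\bTheta)$. The only cosmetic difference is ordering — the paper first restricts to the full subcategory $\rB\End(\bTheta)\subset\Cat_{(\infty,n)}$ and then passes to its core $\rB\End(\bTheta)^\times\simeq\rB\sAut(\bTheta)$, whereas you pass to the core of $\Cat_{(\infty,n)}$ first and then restrict to the component of $\bTheta$.
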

\begin{proof}
Recall that the straightening equivalence takes a fibration $f:\widetilde{\bTheta}\to \cX$ to the functor $\un_\cX(f):\cX\to \Cat_{(\infty,n)}$ which takes $x\in \cX$ to the fiber of $f$ over $x$.  Note that $\rB \End(\bTheta)\subset \Cat_{(\infty,n)}$ is the full subcategory on the object $\bTheta$.  

Then unstraightening induces an equivalence between the category of $(\cX,\bTheta)$-anomalies and the full subcategory of $\Fun(\cX,\Cat_{(\infty,n)})$ on those objects $F\in \Fun(\cX,\Cat_{(\infty,n)})$ which factor through $\rB \End(\bTheta)\subset \Cat_{(\infty,n)}$. 

Since $\cX$ is a groupoid, every such functor factors through the groupoid core $\rB \End(\bTheta)^\times \simeq \rB \sAut(\bTheta)$.  Then unstraightening induces an equivalence between the category of $(\cX,\bTheta)$-anomalies and 
\begin{align*}
    \Fun(\cX, \rB &\sAut(\bTheta))\simeq \Fun(\cX, \rB \End(\bTheta))\subset \Fun(\cX,\Cat_{(\infty,n)}). 
\end{align*}
\end{proof}

\begin{example}
Take $\cX = \rB G$ and $\mathbf{\Theta} = \mathbf{Vect}$, the category of finite dimensional vector spaces.  Notice that $\sAut(\Vect)\simeq \rB \mathbb{C}^\times$.  By Theorem \ref{thm:catofanomalies}, anomalies for $\rB G$-families of 1-dimensional TQFTs valued in $\Vect$ are classified by 
\begin{equation}
\mathrm{H}^2(\rB G;\mathbb{C}^\times)\simeq
\pi_0\Fun(\rB G,\rB^2\mathbb{C}^\times)\simeq 
\pi_0\Fun(\rB G, \rB \sAut(\Vect)). 
\end{equation}
This agrees with the typical classification of anomalies for 0-form $G$-symmetries in $1$-dimensional theories.
\end{example}

\begin{example}
   For $G$ a (sufficiently abelain) group, take $\cX=\rB^{i+1} G$ and $\bTheta=\nVect$.  There is a functor $\rB^{n}\mathbb{C}^\times\to \sAut(\nVect)$, taking $\mathbb{C}^\times$ to the twists by invertible $n$-morphisms of the unit $n$-vector space.  This gives a postcomposition functor
   \begin{equation}
       \mathrm{H}^{n+1}(\rB^{i+1} G,\mathbb{C}^\times)\simeq \pi_0\Fun(\rB^{i+1} G,\rB^{n+1}\mathbb{C}^\times)\to \pi_0\Fun(\rB^{i+1} G,\rB\sAut(\nVect)), 
   \end{equation}
   which recovers the anomalies of $i$-form $G$-symmetries in $n$-dimensional QFTs.  Notice that this is \textbf{not} an isomorphism for all $n$, since one may also have anomalies corresponding to `Witt class' type objects in higher vector spaces.  
\end{example}

\begin{rem}
In the discussion of Lemma \ref{lem:trivialbundle} and Theorem \ref{thm:catofanomalies}, one could instead replace the $\infty$-groupoid $\cX$ with an $(\infty,n)$-category $\cC$ to obtain an equivalence between $(\cC,\bTheta)$-anomalies and the category \begin{equation*}
    \Fun(\cC,\rB \End(\bTheta))\subset \Fun(\cC,\Cat_{(\infty,n)}). 
\end{equation*}
This allows one to describe anomalies of non-invertible and categorical symmetries.  For example, given fusion category $\cD$, anomalies of $0$-form $\cD$-symmetry are described by $\Fun(\rB \cD,\rB \End(\bTheta))$. In the case where we take the target to be $\nVect$ we recover the definition in \cite{Thorngren:2019iar}, where categorical anomalies are obstructions to fiber functors. The fiber functors perspective can be improved to give a better quantification of categorical anomalies, which was done in \cite{Antinucci:2025fjp} using exact sequences of tensor categories.
\end{rem}

\begin{thm}\label{thm:section}
    There is an equivalence of categories between $(\cX,\mathbf{\Theta})$-theories with anomaly $\alpha \in \Fun(\cX,\Spaces)$, and natural transformations $\underline{\pt}\to \alpha$, where $\underline{\pt}$ is the constant functor $\cX\to \Spaces$ valued at $\pt\in \Spaces$. 
\end{thm}
\begin{proof}
Fix an anomaly $\alpha$, which unstraightens to give a fibration $\un(\alpha) \rightarrow  \cX$. We want to show that a section for this fibration corresponds to a natural transformation.
By the (un)straightening correspondence, we have an equivalence 
   \begin{align}\label{eq:nattrans}
       \Hom_{\fib_{/\cX}}(\cX, \un(\alpha))
       \simeq \Hom_{\Fun(\cX, \Spaces)}(\st(\cX),\alpha)
      \simeq \Hom_{\Fun(\cX, \Spaces)}(\underline{\pt},\alpha).
   \end{align}
   Here, $\Hom_{\fib_{/\cX}}(\cX, \un(\alpha))$ corresponds to sections of $\un(\alpha)\rightarrow \cX$, and $\Hom_{\Fun(\cX, \Spaces)}(\underline{\pt},\alpha)$ corresponds to natural transformations $\underline{\pt}\to \alpha$. 
\end{proof}

\subsection{From anomalies to relative theories}\label{subsection:anomliestorelative}
Using the fibrational definition of anomaly given in Definition \ref{def:anomaly}, we now give an interpretation of \emph{anomaly inflow} by explaining how these anomalous $n$-dimensional theories can be interpreted as a theory relative to an $(n+1)$-dimensional theory.  We must pass to a higher categorical level and regard the target $\bTheta$ as an object in a $(n+1)$-category.

Let $\mathbf{\Sigma}$ be a symmetric monoidal $(n+1)$-category with duals, which we view as an `ambient category'. Suppose that $\mathbf{\Theta} \in \mathbf{Alg}_{E_\infty}(\mathbf{\Sigma})$ is a commutative algebra object. 
 
\begin{defn}

Denote by $\rB \End(\mathbf{\Theta})\subset \mathbf{\Sigma}$ the full subcategory on the object $\mathbf{\Theta}$.  An \emph{internal $(\cX,\bTheta)$-anomaly} is an object 
\begin{equation}
    \alpha \in \Fun(\cX,\rB \End(\bTheta))\subset \Fun(\cX, \mathbf{\Sigma}). 
\end{equation}
\end{defn}

\begin{rem}\label{rem:employ}
By cobordism hypothesis with singularities, a morphism in $\Fun(\cX, \bSigma)$ (i.e. a natural transformation) determines a defect between $\cX$-families of theories valued in $\bSigma$, or equivalently, an $n$-dimensional relative theory. 
\end{rem}

We now turn a $(\cX,\bTheta)$-internal anomaly into an anomaly valued in a different target. 
Consider the functor
\begin{equation}
h_{\bTheta}=\Hom_{\bSigma}(\bTheta,-):\bSigma\to \Cat_{(\infty,n)}.
\end{equation}
The image of $\bTheta$ under this functor is $\End_{\bSigma}(\bTheta)$, which inherits the structure of a symmetric monoidal $(\infty,n)$-category with duals, i.e. is a good target for $n$-dimensional TQFTs. Then $h_\bTheta$ restricts to a functor $h_\bTheta:\rB \End(\bTheta)\to \rB \End(\End_{\bSigma}(\bTheta))$, which induces a postcomposition map:
\begin{equation}
\widehat{{h_\bTheta}}:\Fun(\cX, \rB\End(\bTheta))\to \Fun(\cX, \rB \End(\End_{\bSigma}(\bTheta))). 
\end{equation}
In particular, any internal $(\cX,\bTheta)$-anomaly $\alpha$ determines an $(\cX,\End_{\bSigma}(\bTheta))$-anomaly given by $\widehat{{h_\bTheta}}(\alpha)$. 

Suppose now that $h_\bTheta$ has a left adjoint on some full subcategory $\Cat^\pt_{(\infty,n)}\subset\Cat_{(\infty,n)}$ which contains the point and the essential image of $h_\Theta$:
\begin{equation}\label{eq:sigmaadj}
\begin{tikzcd}
\bSigma \,\,\ar[r,shift left=2, "{{h_\bTheta}}"{name=A}]& {\Cat^\pt_{(\infty,n)}}\ar[l,shift left=2,"{{\rint}}"{name=B}].\ar[phantom, from=A,to=B,"\dashv"rotate=90] 
\end{tikzcd}
\end{equation}
We will denote such an adjoint by $\mathrm{int}$, for `internalization'.

\begin{thm}\label{thm:relativegeneral}
Consider a symmetric monoidal $(n+1)$-category $\bSigma$, an algebra $\bTheta\in \mathbf{Alg}_{E_\infty}(\bSigma)$, and internal $(\cX,\bTheta)$-anomaly $\alpha$.  Suppose that $h_{\bTheta}$ has a left adjoint on a subcategory $\Cat^{\pt}_{(\infty,n)}$ as in \ref{eq:sigmaadj}. 

There is an equivalence of categories between the category of $(\cX,\bSigma)$-theories relative to $\alpha$, and the category of $(\cX,\End_\bSigma(\bTheta))$-theories with anomaly $\widehat{h}_{\bTheta}(\alpha)$. 
\end{thm}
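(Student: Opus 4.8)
The plan is to realize both categories in the statement as mapping objects in suitable functor categories and then to transport between them using the adjunction $\rint\dashv h_\bTheta$ of \eqref{eq:sigmaadj}, lifted to a postcomposition adjunction on functor categories.

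First I would unpack the two sides. On the left, by Remark \ref{rem:employ} a relative theory is a morphism in $\Fun(\cY,\bSigma)$; since $\alpha$ factors through the full subcategory $\rB\End(\bTheta)$, so that $\alpha(y)\simeq\bTheta$ for every $y\in\cY$, the natural reference is the constant functor $\underline{\bTheta}$, and a $(\cY,\bSigma)$-theory relative to $\alpha$ is an object of $\Hom_{\Fun(\cY,\bSigma)}(\underline{\bTheta},\alpha)$ --- a defect from the untwisted $\bTheta$-family to the twisted family $\alpha$. On the right, $\widehat{h_\bTheta}(\alpha)=h_\bTheta\circ\alpha$ is the $(\cY,\End_\bSigma(\bTheta))$-anomaly of constant value $\End_\bSigma(\bTheta)$, so by Definition \ref{def:anomaly} a theory with this anomaly is a section of the unstraightening $\un(\widehat{h_\bTheta}(\alpha))\to\cY$. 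Running the straightening argument of Theorem \ref{thm:relative} verbatim, now with $\Cat_{(\infty,n)}$ coefficients in place of $\Spaces$, identifies these sections with $\Hom_{\Fun(\cY,\Cat_{(\infty,n)})}(\underline{\pt},\widehat{h_\bTheta}(\alpha))$.

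Next I would invoke the adjunction. Postcomposition preserves adjunctions, so $\rint\dashv h_\bTheta$ on $\Cat^\pt_{(\infty,n)}$ induces $\widehat{\rint}\dashv\widehat{h_\bTheta}$ between $\Fun(\cY,\Cat^\pt_{(\infty,n)})$ and $\Fun(\cY,\bSigma)$. Both objects in play lie in the relevant subcategory --- $\underline{\pt}$ is constant at $\pt$, and $\widehat{h_\bTheta}(\alpha)$ is constant at $\End_\bSigma(\bTheta)$, which is in the essential image of $h_\bTheta$ by the hypothesis on $\Cat^\pt_{(\infty,n)}$ --- so the adjunction yields $\Hom_{\Fun(\cY,\Cat^\pt_{(\infty,n)})}(\underline{\pt},\widehat{h_\bTheta}(\alpha))\simeq\Hom_{\Fun(\cY,\bSigma)}(\widehat{\rint}(\underline{\pt}),\alpha)$. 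Since postcomposition with $\rint$ sends $\underline{\pt}$ to $\underline{\rint(\pt)}$, it remains to identify $\rint(\pt)$. Here I would use the defining adjunction together with higher Yoneda: naturally in $X$ one has $\Hom_\bSigma(\rint(\pt),X)\simeq\Hom_{\Cat_{(\infty,n)}}(\pt,h_\bTheta(X))\simeq h_\bTheta(X)=\Hom_\bSigma(\bTheta,X)$, where the middle equivalence uses that $\Fun(\pt,\cC)\simeq\cC$ for the unit object $\pt$. Thus $\rint(\pt)\simeq\bTheta$, hence $\widehat{\rint}(\underline{\pt})\simeq\underline{\bTheta}$, and the chain of equivalences closes against the left-hand side produced in the first step.

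The main obstacle is the bookkeeping of categorical levels rather than any single hard estimate. I must apply $\rint\dashv h_\bTheta$ at the level of hom-$(\infty,n)$-categories, not merely of mapping spaces, so that the identity $\Hom_{\Cat_{(\infty,n)}}(\pt,\cC)\simeq\cC$ --- valid because $\Cat_{(\infty,n)}$ is regarded as an $(\infty,n+1)$-category with $\Fun(\pt,\cC)$ as its hom --- is available for the Yoneda computation of $\rint(\pt)$. I must also confirm that postcomposition along the adjunction genuinely restricts to the constant-value functors at hand, which is guaranteed precisely because $\pt$ and $\End_\bSigma(\bTheta)$ were assumed to lie in $\Cat^\pt_{(\infty,n)}$; and the whole argument hinges on the identification $\rint(\pt)\simeq\bTheta$, which is what certifies that the reference family manufactured by $\widehat{\rint}(\underline{\pt})$ is exactly the one implicit in ``relative to $\alpha$.''
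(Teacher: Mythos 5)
Your proposal is correct and follows essentially the same route as the paper: both lift the adjunction $\rint\dashv h_\bTheta$ of \eqref{eq:sigmaadj} to a postcomposition adjunction $\widehat{\rint}\dashv\widehat{h_\bTheta}$ on $\Fun(\cY,-)$, apply it to $\underline{\pt}$ and $\widehat{h_\bTheta}(\alpha)$, and identify $\widehat{\rint}(\underline{\pt})\simeq\underline{\rint(\pt)}$. Your one addition --- the Yoneda argument showing $\rint(\pt)\simeq\bTheta$, so that the relative theories are precisely morphisms $\underline{\bTheta}\to\alpha$ --- is a genuine refinement of a step the paper leaves implicit (the paper stops at $\underline{\rint(\pt)}$ and appeals to Remark \ref{rem:employ}), and it is consistent with the paper's linearized example where $\lin(\pt)\simeq\nVect$.
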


\begin{proof}
Since $\Fun(\cX,-)$ is a covariant $(\infty,n+1)$-functor, the adjunction in Equation \ref{eq:sigmaadj}
gives an adjunction of the form:
\begin{equation}
\begin{tikzcd}
\Fun(\cX,
\bSigma)\ar[r,shift left=2, "{\widehat{h_\bTheta}}"{name=A}]&\Fun(\cX, {\Cat^\pt_{(\infty,n)}})\ar[l,shift left=2,"{\widehat{\rint}}"{name=B}].\ar[phantom, from=A,to=B,"\dashv"rotate=90] 
\end{tikzcd}
\end{equation}
 This induces an equivalence 
 \begin{equation}
     \Hom_{\Fun(\cX,\Cat^\pt_{(\infty,n)})}(\underline{\pt},\widehat{h}_\Theta(\alpha))
     \simeq \Hom_{\Fun(\cX,\bSigma)}(\,\widehat{\rint}(\underline{\pt}),\alpha) 
     \simeq \Hom_{\Fun(\cX,\bSigma)}(\underline{{\rint}(\pt)},\alpha)\,,
 \end{equation}
where $\underline{\rint(\pt)}$ is the constant functor valued at $\rint(\pt)$.  By employing the cobordism hypothesis as in Remark \ref{rem:employ}, the sequence of equivalences concludes the proof.
\end{proof}

Informally, this expresses the idea that a theory with anomaly can be understood as a boundary condition between the trivial (vacuum) theory and a nontrivial bulk theory in one higher dimension.

\begin{example}
As a special case of Theorem \ref{thm:relativegeneral}, take $\bSigma = \nnVect$, and $\bTheta = \nVect$.
Recall that there is a forgetful-free adjunction 
\begin{equation}\label{eq:adjunction}
    \begin{tikzcd}
        \RMod_{\nVect}(\Cat_{(\infty,n)})\ar[r,shift left=2,"{\oblv}"{name=A}]& \Cat_{(\infty,n)}  \,\ar[l,shift left=2,"\mathrm{lin}"{name=B}],\ar[phantom, from=A,to=B,"\dashv"rotate=90]
    \end{tikzcd}
\end{equation}
where $\oblv\simeq \Hom_{\RMod_{\nVect}}(\nVect,-)$.  $\lin(\pt)\simeq \nVect$ lands in the full subcategory $\textbf{(n+1)}\Vect \subset \RMod_{\nVect}(\Cat_{(\infty,n)})$. Viewing $\alpha$ as an object in $\Fun(\cX,\mathbf{(n+1)Vect})$, we use the adjunction in Equation \ref{eq:adjunction} to get the following equivalence between categories of natural transformations 
\begin{multline}
    \Hom_{\Fun(\cX,\Cat^{\pt}_{(\infty,n)})}(\underline{\pt},\oblv(\alpha))
     \simeq \Hom_{\Fun(\cX,\mathbf{(n+1)Vect})}(\lin(\underline{\pt}),\alpha)\\
     \simeq \Hom_{\Fun(\cX,\mathbf{(n+1)Vect})}(\underline{\nVect},\alpha)
\end{multline}
Combined with Theorem \ref{thm:section} we see that anomalous theories valued in $\nVect$ are natural transformations from $\underline{\nVect}$ to $\alpha$, where $\underline{\nVect}$ is the constant functor valued at $\nVect$.

\end{example}

\begin{cor}\label{thm:anomaliesrelative}
Fix internal $(\cX,\nVect)$-anomaly $\alpha\in \Fun(\cX,\rB\ \End(\nVect))\simeq \Fun(\cX ,\rB\ \nVect)$.  There is an equivalence of categories between theories relative to $\alpha$, and $(\cX,\nVect)$-theories with anomaly $\alpha$. 
\end{cor}

\noindent
\textbf{Acknowledgments:} \emph{It is a pleasure to thank Arun Debray  and Ryan Thorngren for informative discussions.  DS is supported by VILLUM FONDEN, VILLUM Young Investigator grant 42125. MY is supported by the EPSRC Open Fellowship EP/X01276X/1, and would like to especially thank Abhinav Prem for enlightening discussions that led to initial interest in the Crystalline Equivalence Principle. No authors have competing interests to declare that are relevant to the content of this article. There is no data available for this article to declare.}

\bibliographystyle{alpha}
\bibliography{ref}
\end{document}